\renewcommand{\P}{\mathrm{P}}
\newcommand{\E}{\mathrm{E}}
\newcommand{\I}{\mathrm{I}}
\newcommand{\B}{\mathcal{B}}
\newcommand{\F}{\mathcal{F}}
\newcommand{\FF}{\mathbb{F}}
\newcommand{\R}{\mathbb{R}}
\newcommand{\midd}{\;\bigg|\;}
\renewcommand{\hat}{\widehat}
\renewcommand{\tilde}{\widetilde}
\renewcommand{\epsilon}{\varepsilon}
\newcommand{\RNP}{{\mathbb{R}^N_+}}
\renewcommand{\L}{\mathrm{L}}
\newtheorem{theorem}{Theorem}
\newtheorem{lemma}{Lemma}
\theoremstyle{definition}
\newtheorem{definition}{Definition}
\newtheorem{remark}{Remark}
\newtheorem{example}{Example}
\title{Relative growth optimal strategies in an asset market game}
\author{Yaroslav Drokin\thanks{National Research University Higher School of
Economics, Faculty of Mathematics. 6~Usacheva St., Moscow 119048, Russia.
E-mail: yadrokin@mail.ru.}
\and
Mikhail Zhitlukhin\thanks{Steklov Mathematical Institute of the Russian
Academy of Sciences. 8 Gubkina St., Moscow 119991, Russia. E-mail:
mikhailzh@mi-ras.ru. Corresponding author.
\newline\hspace*{\parindent} The research was supported by the Russian
Science Foundation, project 18-71-10097.}}
\date{23 July 2020}
\begin{document}
\maketitle

\begin{abstract}
We consider a game-theoretic model of a market where investors compete for
payoffs yielded by several assets. The main result consists in a proof of
the existence and uniqueness of a strategy, called relative growth
optimal, such that the logarithm of the share of its wealth in the total
wealth of the market is a submartingale for any strategies of the other
investors. It is also shown that this strategy is asymptotically optimal in
the sense that it achieves the maximal capital growth rate when compared to
competing strategies. Based on the results obtained, we study the asymptotic
structure of the market when all the investors use the relative growth
optimal strategy.

\medskip \textit{Keywords:} relative growth optimal strategy, asset market
game, evolutionary finance, martingale convergence.

\medskip
\noindent
\textit{MSC 2010:} 91A25, 91B55. \textit{JEL Classification:} C73, G11.

\medskip
\noindent
This paper was pubished in \textit{Annals of Finance} (2020).\\
\url{https://doi.org/10.1007/s10436-020-00360-6}
\end{abstract}

\section{Introduction}
Growth optimal strategies are a well-studied topic in mathematical finance.
However, the majority of models in the literature assume exogenously
specified returns of assets and consider models with a single investor. In
this setting growth optimal strategies arise as solutions of optimization
problems (see \cite{Breiman61,AlgoetCover88,Platen06,KaratzasKardaras07}).
In the present paper we study growth optimality of investment strategies
from a game-theoretic perspective and consider a model of a market (an \emph{asset
market game}), where several investors compete for random payoffs yielded by
several assets at discrete moments of time on the infinite time interval.
The payoffs are divided between the investors proportionally to shares of
assets they buy at prices determined endogenously by a short-run equilibrium
of supply and demand. As a result, the profit or loss of one investor
depends not only on the realized payoffs, but also on actions of the
competitors. In our model we assume that the assets are short-lived in the
sense that they are traded at time $t$, yield payoffs at $t+1$, and then the
cycle repeats. Thus, they can be viewed as some short-term investment
projects rather than, e.g., common stock.

The goal of the paper is to identify an investment strategy, called
relative growth optimal, such that the logarithm of the relative wealth of
an investor who uses it is a submartingale no matter what the strategies of
the other investors are (by relative wealth we mean the share of wealth of
one investor in the total wealth of the market). In conventional
(single-investor, i.e.\ non-game) market models, it is well-known that the
submartingale property implies various asymptotic optimality properties for
the growth rate of wealth (see, e.g.,
\cite{AlgoetCover88,KaratzasKardaras07}). Results of a similar nature turn
out to be true in our model as well, though their proofs use different
ideas. In particular, we show that the relative wealth of an investor who
uses the relative growth optimal strategy stays bounded away from zero
with probability one. We also show that if the representative strategy of
the other investors is asymptotically different, then such an investor will
dominate in the market -- the corresponding share of wealth will tend to
one. In addition, the relative growth optimal strategy maximizes the
growth rate of wealth and forms a symmetric Nash equilibrium in a game where
all investors maximize their expected relative wealth.

Our model extends the model proposed by \cite{AmirEvstigneev+13}, who also
studied  optimal strategies in an asset market game with
short-lived assets. The main difference between our model and their model is
that we assume the presence of a bank account (or a risk-free asset) with
exogenous interest rate. In the simplest form, if the interest rate is
zero, it just gives investors the possibility to put only a part of their
wealth in the assets and to keep a part of wealth in cash. In the model of
\cite{AmirEvstigneev+13}, it is assumed that the whole wealth is reinvested
in the assets in each time period, and that model can be obtained from ours
if one let the interest rate be $-1$, so that it is not reasonable to keep
money in the bank. The inclusion of a bank account in the model leads to a
more difficult construction of the optimal strategy. But, at he same time,
it also opens a series of new interesting questions regarding the asymptotic
behavior of the absolute wealth of investors that do not arise in the model
where the whole wealth is reinvested in assets.

Our paper can be reckoned among papers that study long-run performance of
investment strategies from the point of view of evolutionary dynamics, i.e.\
a market is considered as a population of various strategies which compete
for capital. This approach can be used to analyze forces that determine
long-run market dynamics through a process of natural selection of
investment strategies, see, e.g., the seminal paper by \cite{BlumeEasley92},
where a model with a discrete probability space was considered, and its
further development \citep{BlumeEasley06}. In later works this field is
called Evolutionary Finance; for recent literature reviews, see, e.g.,
\cite{EvstigneevHens+16,Holtfort19}.

The paper is organized as follows. In Section~\ref{sec-model}, we formulate
the model and introduce the notion of relative growth optimality of
investment strategies. In Section~\ref{sec-strategy}, we construct a
relative growth optimal strategy in an explicit form and show that it is
unique in a certain sense. Sections~\ref{sec-dominance}
and~\ref{sec-other-properties} study further optimality properties of this
strategy. Section~\ref{sec-wealth} is devoted to the analysis of the
asymptotics of the absolute wealth of investors when they use the relative
growth optimal strategy.

\section{The model}
\label{sec-model}
The market in the model consists of $M\ge 2$ investors, $N\ge1$ risky
assets, and a bank account (or cash). The assets yield payoffs which are
distributed between the investors at discrete moments of time $t=1,2,\ldots$
The investors choose, at every moment of time, proportions of their wealth
they invest in the assets and proportions they keep in the bank account. The
assets live for one period: they are traded at time $t$, yield payoffs at
$t+1$, and then the cycle repeats. Asset prices are determined endogenously
by a short-run equilibrium of supply and demand; in this model, without loss
of generality, we assume that each asset is in unit supply.

Let $(\Omega,\F,\P)$ be a probability space with a filtration
$\FF=(\F_t)_{t=0}^\infty$ on which all random variables will be defined.
Payoffs of asset $n=1,\ldots,N$ are specified by a random sequence $X_t^n\ge
0$, $t\ge 1$, which is $\FF$-adapted ($X_t^n$ is $\F_t$-measurable for all
$t\ge1$). It is assumed that $X_t^n$ are given exogenously, i.e. do not
depend on actions of the investors. Return on the bank account is specified
by an exogenous $\FF$-predictable sequence $\rho_t\ge 0$ (i.e.\ $\rho_t$ is
$\F_{t-1}$-measurable), such that $\rho_t-1$ is interpreted as spot interest
rate between moments of time $t-1$ and $t$. Note that the interest rate may
be negative. We will assume that
\begin{equation}
\rho_t + \sum_n X_t^n > 0\ \text{ a.s. for all}\ t\ge
1\label{positive-condition}
\end{equation}
(otherwise the model may degenerate as will become clear below).

The wealth of investor $m$ is described by an adapted random sequence
$Y_t^m\ge 0$. The quantity $Y_t^m$ is the budget that this investor can
allocate at time $t$ for investment in the assets and the bank account. We
assume that the initial budget $Y_0^m$ of each investor is non-random and
strictly positive. The wealth $Y_t^m$ at moments of time $t\ge 1$ depends on
investors' strategies and the asset payoffs.

A strategy of investor $m$ consists of vectors of investment proportions
$\lambda_t^m = (\lambda_t^{m,1},\allowbreak\ldots,\lambda_t^{m,N})$, $t\ge 1$,
according to which this investor allocates available budget towards purchase
of assets at time $t-1$. The proportion $1-\sum_n\lambda_t^{m,n}$ is
allocated in the bank account. We assume that short sales and borrowing from
the bank account are not allowed, so vectors
$\lambda_t=(\lambda_t^{m,n})$ belong to the set $\Delta =\{\lambda\in
\R_+^{MN} : \sum_n \lambda^{m,n} \le1\ \text{for each}\ m\}$.

At each moment of time, investment proportions are selected by the investors
simultaneously and independently, so the model represents a
simul\-ta\-neous-move $N$-person dynamic game, and the proportion vectors
$\lambda_t^m$ represent the investors' actions. These actions may depend on
the game history, and we define a strategy $\Lambda^m$ of investor
$m$ as a sequence of functions
\[
\Lambda_t^{m}(\omega, y_0,\lambda_1,\ldots,\lambda_{t-1})\colon \Omega\times
\R^M_+\times \Delta^{t-1}\to [0,1]^N,\qquad t\ge1,
\]
which are $\F_{t-1}\otimes\B(\R_+\times\Delta^{t-1})$-measurable and for all
$\omega,t,y_0,\lambda_1,\ldots,\lambda_{t-1}$ satisfy the condition
\[
\sum_n \Lambda_t^{m,n}(\omega, y_0,\lambda_1,\ldots,\lambda_{t-1}) \le 1.
\]
The argument $y_0\in \R_+^M$ corresponds to the vector of initial capital
$Y_0=(Y_0^1,\ldots,Y_0^M)$. The arguments $\lambda_s=(\lambda_s^{m,n})$,
$m=1,\ldots,M$, $n=1,\ldots,N$ are investment proportions selected by the
investors at past moments of time (for $t=1$, the function
$\Lambda_t^m(\omega,y_0)$ does not depend on $\lambda_s$). The value of the
function $\Lambda_t^m$ corresponds to the vector of investment proportions
$\lambda_t^{m}$. The measurability of $\Lambda_t^m$ in $\omega$ with respect
to $\F_{t-1}$ means that future payoffs are not known to the investors at
the moment when they decide upon their actions.

After selection of investment proportions by the investors at time $t-1$,
equilibrium asset prices $p_{t-1}^n$ are determined from the market clearing
condition that the aggregate demand of each asset is equal to the aggregate
supply, which is assumed to be 1. Since investor $m$ can purchase
$x_t^{m,n} = \lambda_t^{m,n} Y_{t-1}^m / p_{t-1}^n$ units of asset $n$,
the asset prices at time $t-1$ should be equal
\[
p_{t-1}^n = \sum_m \lambda_{t}^{m,n} Y_{t-1}^m.
\]
If $\sum_m\lambda_t^{m,n}=0$, i.e.\ no one invests in asset $n$, we put
$p_{t-1}^n=0$ and $x_t^{m,n}=0$ for all $m$.

Thus, investor $m$'s portfolio between moments of time $t-1$ and $t$
consists of $x_t^{m,n}$ units of asset $n$ and $c_t^{m} :=
(1-\sum_n\lambda_t^{m,n}) Y_{t-1}^m$ units of cash held in the bank account.
At moment of time $t$, the total payoff received by this investor from the
assets in her portfolio will be equal to $\sum_n x_t^{m,n} X_t^n$ and the
(gross) return on the bank account will be $\rho_tc_t^m$. Consequently, her
wealth is determined by the recursive relation
\begin{equation}
Y_{t}^m = \rho_t\Bigl(1-\sum_n \lambda_{t}^{m,n}\Bigr)Y_{t-1}^m + \sum_n
\frac{\lambda_t^{m,n} Y_{t-1}^m}{\sum_k
\lambda_{t}^{k,n} Y_{t-1}^k} X_t^n,\qquad
t\ge 1
\label{capital-equation-1}
\end{equation}
(with $0/0=0$ in the right-hand side). Here and in what follows, $Y_t^m$,
$Y_{t-1}^m$, $\rho_t$, $X_t^n$ are functions of
$\omega$ only, and by $\lambda_t^{m,n}$ we denote the \emph{realization} of
investor $m$'s strategy in this market, which is defined recursively as the
predictable sequence
\begin{equation}
\lambda_t^{m,n}(\omega) = \Lambda_t^{m,n}(\omega, Y_0, \lambda_1(\omega),
\ldots,\lambda_{t-1}(\omega)).\label{realization}
\end{equation}

Note that in our model investors' actions precede asset prices, so investors
first ``announce'' how much they allocate in each asset, and then the prices
are adjusted to clear the market. This modeling approach is analogous to
market games of Shapley--Shubik type. Although it is a simplification of a
real market, such an approach is economically reasonable (see
\cite{ShapleyShubik77} for details and justification).

We call \eqref{capital-equation-1} the \emph{wealth equation}, and it is the
principal equation in our model. Mainly, we will be interested not in the
absolute wealth $Y_t^m$, but in the relative wealth, i.e. the proportion of
wealth of one investor in the total wealth of all investors. The total
wealth is defined as
\[
W_t = \sum_m Y_t^m,
\]
and the relative wealth of investor $m$ is defined as
\[
r_t^m = \frac{Y_t^m}{W_t}
\]
(when $W_t=0$, we put $r_t^m=0$).

\begin{definition}
We call a strategy $\Lambda^m$ of investor $m$ \emph{relative growth
optimal} if for any vector of initial capital $Y_0$ (with $Y_0^k>0$ for all
$k$) and strategies $\Lambda^k$ of the other investors $k\neq m$,
\[
\ln r_t^m\text{ is a submartingale}.
\]
\end{definition}

Such a strategy is optimal in several aspects. First, observe that if a
strategy is relative growth optimal, then also $r_t^m$ is a submartingale by
Jensen's inequality. As a corollary, it is not hard to see that a strategy
profile in which every investor uses a relative growth optimal strategy is a
Nash equilibrium in the game where investors maximize $\E r_t^m$ at a fixed
moment of time $t$. This follows from the fact that if the strategies of
investors $k\neq m$ are relative growth optimal, then $r_t^m=1-\sum_{k\neq
m} r_t^k$ is a supermartingale. As we will show in
Section~\ref{sec-strategy}, a relative growth optimal strategy is unique, so
when every investor uses it, their relative wealth will remain constant.

Second, as will be shown in Section~\ref{sec-other-properties}, an
investor who uses a relative growth optimal strategy achieves the highest
growth rate of wealth compared to the other investors in the market. This
property is analogous to the growth optimality property in single-investor
market models (\cite{Breiman61,AlgoetCover88,Kelly56}; and others). However,
it is essential that we require the logarithm of relative wealth $\ln r_t^m$
to be a submartingale; the logarithm of wealth $\ln Y_t^m$ may be not a
submartingale (see Section~\ref{sec-wealth}).

Also, a relative growth optimal strategy belongs to the class of survival
strategies, which plays the central role in Evolutionary Finance (see
\cite{EvstigneevHens+16}) and is defined as follows.

\begin{definition}
A strategy $\Lambda^m$ of investor $m$ is called \emph{survival} if for any
strategies of the other investors
\[
\inf_{t\ge 0} r_t^m >0 \text{ a.s.}
\]
\end{definition}

An investor using a survival strategy cannot be driven out of the market
(even asymptotically) in the sense that her relative wealth always stays
bounded away from zero. In this definition, we use the terminology of
\cite{AmirEvstigneev+13}; note that, for example, \cite{BlumeEasley92} use
the term ``survival'' in a somewhat different meaning.  The fact that a
relative growth optimal strategy is survival readily follows from that
$\ln r_t^m$ is a non-positive submartingale, and hence it has a finite limit
$l = \lim_{t\to\infty} \ln r_t^m$ a.s.\ (see, e.g., Chapter~7 of
\cite{Shiryaev19en} for this and other results from the theory of
discrete-time martingales used in this paper). Therefore, $\lim_{t\to\infty}
r_t^m = e^l > 0$ a.s.

\begin{remark}[Relation to the Amir--Evstigneev--Schenk-Hopp\'e model]
Our model generalizes the model of Amir, Evstigneev, and Schenk-Hopp\'e
\citep{AmirEvstigneev+13}, where it was assumed that investors reinvest
their whole wealth in assets in each time period. That model can be obtained
as a particular case of our model by taking $\rho_t\equiv 0$, so that it is
never reasonable to keep money in the bank account.

However, despite similarity, construction of the optimal strategy in our
model turns out to be more difficult. In particular, the optimal strategy of
\cite{AmirEvstigneev+13} is \emph{basic} in the sense that its investment
proportions do not depend on past actions of investors, while, as we will
see in the next section, in our model they depend on current wealth of all
investors, which depends on their past actions. Moreover, our model opens
interesting questions about the asymptotic behavior of the total wealth of
investors, which do not arise when the whole wealth is reinvested in assets.
We consider these questions in Section~\ref{sec-wealth}.

Another model of this kind with short-lived assets and a risk-free asset was
considered by \cite{BelkovEvstigneev+17}, where the existence of a survival
strategy was established, and it was also shown that all basic survival
strategies are asymptotically equal. However, in that model asset payoffs
depend on ``money supply'' (amount of capital not invested in assets) in a
special way, which allows to reduce that model to the one of
\cite{AmirEvstigneev+13}.
\end{remark}

\section{Existence and uniqueness of a relative growth optimal strategy}
\label{sec-strategy}
Let us first introduce auxiliary notation and definitions.

We will use the following convenient notation for vectors. If $x,y\in \R^N$,
we will denote their scalar product by $xy=\sum_n x^ny^n$, the $\L^1$-norm
by $|x|=\sum_n |x^n|$, and the $\L^2$-norm by $\|x\| = \sqrt{xx}$. If
$f\colon \R\to \R$ is a function, then $f(x)$ denotes the vector
$(f(x^1),\ldots,f(x^N))$. By $a\vee b$ we will denote the maximum of
variables $a,b$, and by $a\wedge b$ the minimum.

The realization of investor $m$'s strategy, defined in \eqref{realization},
will be denoted by $\lambda^m_t(\omega)$, and when it is necessary to
emphasize that it depends on the initial capital and strategies of the other
investors we will use the notation $\lambda_t^m(\omega; Y_0, L)$, where $L =
(\Lambda^1,\ldots,\Lambda^M)$ stands for a strategy profile; the argument
$\omega$ will be often omitted for brevity.

Let us introduce the notion of equality of strategies that will be used to
state that the relative growth optimal strategy is unique. Suppose
$\tau(\omega;Y_0, L)$ denotes a family of random variables, i.e. for any
fixed vector of initial capital $Y_0$ and a strategy profile
$L=(\Lambda^1,\ldots,\Lambda^M)$, the function
$\omega\mapsto\tau(\omega;Y_0,L)$ is $\F$-measurable (and may assume the
value $+\infty$).

\begin{definition}
We say that two strategies $\Lambda^m$ and $\tilde \Lambda^m$ of investor
$m$ are \emph{equal in realization until $\tau$} if for any vector of
initial capital $Y_0$ and any strategies of the other investors $\Lambda^k$,
$k\neq m$, we have the equality of realizations (a.s. for all $t\ge1$)
\[
(\lambda_t^m(\omega; Y_0, L) - \lambda_t^m(\omega; Y_0, \tilde
L))\I(t\le \tau(\omega; Y_0,L)\wedge \tau(\omega; Y_0, \tilde L))=0,
\]
where $L = (\Lambda^{1},\ldots,\Lambda^m,\ldots,\Lambda^{M})$, $\tilde L =
(\Lambda^{1},\ldots,\tilde\Lambda^m,\ldots,\Lambda^{M})$ are the strategy
profiles which differ only in the strategy of investor $m$.
\end{definition}

For example, in Theorem~\ref{th-1} below, we will consider equality until
$\tau = \inf\{t\ge 0 : r_t^m = 1\}$ -- the first moment of time when the
relative wealth of an investor reaches 1 (with $\tau(\omega)=\infty$ if
$r_t^m(\omega) <1$ for all $t$), which, considered as a function
$\omega\mapsto \tau(\omega; Y_0,L)$, is a stopping time for any fixed $Y_0$
and $L$.

The reason why we need to work with equality until $\tau$ is
that when the relative wealth of an investor becomes 1 (and the wealth of
the competitors becomes zero), she may choose any investment proportions and
her relative wealth will always remain 1 (provided that she does not invest
in a ``bad way'' losing all her wealth). Hence, it is not possible to speak
about uniqueness after this moment.

Now we can proceed to the construction of the relative growth optimal
strategy. Let $K_t(\omega, A)\colon \Omega\times \B(\R^N_+)\to[0,1]$ denote
the regular conditional distribution of the payoff vector
$X_t=(X_t^1,\ldots,X_t^N)$ with respect to $\F_{t-1}$, so that for each $t$
and fixed $\omega$ the function $A\mapsto K_t(\omega,A)$ is a probability
measure on the Borel $\sigma$-algebra $\B(\R_+^N)$, and for fixed $A\in
\B(\R^N_+)$ the function $\omega \mapsto K_t (\omega,A)$ is a version of the
conditional probability $\P(X_t\in A\mid \F_{t-1})$.

Define the sequence of sets $\Gamma_t \in \F_{t-1}\otimes\B(\R_+)$,
\[
\Gamma_t = \biggl\{(\omega,c) : \int_{\R_+^N} \frac{c\rho_t(\omega)}{|x|}
K_t(\omega,dx) > 1\biggr\}, \qquad
t\ge 1,
\]
with the following convention: $c\rho_t(\omega)/|x| = 0$ if
$c\rho_t(\omega)=|x|=0$ and $c\rho_t(\omega)/|x| = +\infty$ if
$c\rho_t(\omega)>0$ but $|x|=0$.

The following lemma will play an auxiliary role in construction of the
relative growth optimal strategy.

\begin{lemma}
\label{lem-1}
For all $t\ge 1$ and $(\omega,c)\in \Gamma_t$, there exists a unique
solution $z\in (0,c]$ of the equation
\begin{equation}
\int_{\R_+^N} \frac{c\rho_t(\omega)}{z\rho_t(\omega)+|x|} K_t(\omega, dx) =
1. \label{zeta}
\end{equation}
The function $\zeta_t(\omega,c)$ defined to be equal to this solution on
$\Gamma_t$ and equal to zero outside $\Gamma_t$ is $\F_{t-1}\otimes
\B(\R_+)$-measurable.
\end{lemma}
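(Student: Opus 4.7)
The plan is to recast equation \eqref{zeta} as $\varphi(z)=1$, where for fixed $(\omega,c)$
\[
\varphi(z) := \int_{\R^N_+}\frac{c\rho_t(\omega)}{z\rho_t(\omega)+|x|}\,K_t(\omega,dx),
\]
and to derive existence and uniqueness from strict monotonicity together with the intermediate value theorem. The measurability assertion will then be reduced to the measurability of a parametric integral against the stochastic kernel $K_t$.

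I would first fix $(\omega,c)\in\Gamma_t$ and note that the defining inequality of $\Gamma_t$ forces $c>0$ and $\rho_t(\omega)>0$. For every $x\in\R^N_+$, the integrand is continuous and strictly decreasing in $z$ on $(0,\infty)$, and on each compact subinterval $[\alpha,\beta]\subset(0,\infty)$ it is uniformly bounded by $c/\alpha$. Dominated convergence then gives that $\varphi$ is continuous on $(0,\infty)$, and the pointwise strict monotonicity of the integrand transfers to strict monotonicity of $\varphi$ (since $K_t$ is a probability measure). Monotone convergence as $z\downarrow 0$ gives $\lim_{z\downarrow 0}\varphi(z)=\int c\rho_t/|x|\,K_t(dx)>1$ by the definition of $\Gamma_t$, while at $z=c$ the integrand is pointwise at most $1$, so $\varphi(c)\le 1$. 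The intermediate value theorem then produces a unique $z\in(0,c]$ with $\varphi(z)=1$.

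For the measurability of $\zeta_t$, the key point is that for each fixed $a\ge 0$ the map $(\omega,c)\mapsto\varphi(a;\omega,c)$ is $\F_{t-1}\otimes\B(\R_+)$-measurable: the integrand is jointly measurable in $(\omega,c,x)$ because $\rho_t$ is $\F_{t-1}$-measurable, and the kernel property of $K_t$ together with the standard approximation of a nonnegative measurable integrand by simple functions transfers this measurability to the integral. Since $\varphi$ is strictly decreasing in $z$ on $\Gamma_t$ and $\zeta_t$ is the unique root in $(0,c]$, for every $a>0$ one has
\[
\{\zeta_t\le a\} = \Gamma_t^c\cup\bigl(\Gamma_t\cap\{(\omega,c):\varphi(a;\omega,c)\le 1\}\bigr),
\]
which is $\F_{t-1}\otimes\B(\R_+)$-measurable; the case $a=0$ reduces to $\{\zeta_t=0\}=\Gamma_t^c$, which is measurable by hypothesis.

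The main obstacle I anticipate is purely technical, namely verifying that integration against the transition kernel $K_t(\omega,\cdot)$ preserves joint measurability in the auxiliary parameters $c$ and $z$. Once this standard monotone-class fact about stochastic kernels is invoked, both the existence/uniqueness statement and the measurability of $\zeta_t$ follow from the elementary analysis above.
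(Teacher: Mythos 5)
Your existence--uniqueness argument is essentially the paper's: both rest on the observation that $\varphi(z)=\int_{\R_+^N} c\rho_t(\omega)/(z\rho_t(\omega)+|x|)\,K_t(\omega,dx)$ is continuous and strictly decreasing on $(0,c]$, exceeds $1$ in the limit $z\downarrow 0$ by the definition of $\Gamma_t$, and is at most $1$ at $z=c$; your version just spells out the dominated/monotone convergence details that the paper leaves implicit. Where you genuinely diverge is the measurability claim. The paper packages the root as a measurable selector: it forms the Carath\'eodory function $f(\omega,c,z)=\bigl(2\wedge\varphi(z)-1\bigr)\I((\omega,c)\in\Gamma_t)$ and invokes Filippov's implicit function theorem to get a measurable selector of $\{z\in[0,c]:f=0\}$, which coincides with $\zeta_t$ because the solution set is a singleton on $\Gamma_t$. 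You instead exploit the strict monotonicity of $\varphi$ in $z$ to identify the sublevel sets directly, $\{\zeta_t\le a\}=\Gamma_t^c\cup\bigl(\Gamma_t\cap\{\varphi(a;\cdot,\cdot)\le 1\}\bigr)$ for $a>0$, reducing everything to the standard fact that integration of a jointly measurable nonnegative integrand against the kernel $K_t$ preserves measurability in the parameters (the same fact is already needed to see that $\Gamma_t\in\F_{t-1}\otimes\B(\R_+)$, which the paper also asserts without proof). Your route is more elementary and self-contained --- no measurable selection theorem is needed --- but it leans on the monotone structure of this particular equation; the paper's selection-theorem argument would survive unchanged in situations where the root is unique for non-monotonicity reasons. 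Both proofs are correct.
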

\begin{proof}
The existence and uniqueness of the solution for each $(\omega,c)\in
\Gamma_t$ is straightforward: the left-hand side of \eqref{zeta} is a
continuous and strictly decreasing function in $z\in(0,c]$ which assumes a
value greater than 1 for $z=0$ and a value not greater than 1 for $z=c$.

To prove the measurability, consider the function
$f\colon\Omega\times\R_+\times\R_+\to \R$,
\[
f(\omega,c,z) = \biggl(2\wedge\int_{\R_+^N}
\frac{c\rho_t(\omega)}{z\rho_t(\omega)+|x|} K_t(\omega, dx) -
1\biggr) \I((\omega,c)\in \Gamma_t).
\]
Observe that $f$ is a Carath\'eodory function, i.e.
$\F_{t-1}\otimes\B(\R_+)$-measurable in $(\omega,c)$ and continuous in $z$.
Then by Filippov's implicit function theorem (see, e.g., Theorem~18.17 in  
\cite{AliprantisBorder06}), the set-valued function
\[
\phi(\omega,c) = \{z\in[0,c]  : f(\omega,c,z)=0\}
\]
admits a measurable selector. Since $\phi$ on $\Gamma_t$ is single-valued
(we have $\phi(\omega,c) = \{\zeta_t(\omega,c)\}$), this implies that
$\zeta_t$ is $\F_{t-1}\otimes\B(\R_+)$-measurable.
\end{proof}

In what follows, we will use the notation $\chi_t
=(y_0,\lambda_1,\ldots,\lambda_t)\in \R_+^M\times \Delta^t$ for history of
the market until time $t$. Denote by
$C_t(\omega,\chi_t)=W_t(\omega)=|Y_t(\omega)|$ the total wealth of all
investors at time $t$, where $Y_t$ is defined recursively by relation
\eqref{capital-equation-1} with the given initial wealth $Y_0=y_0$ and
investment proportions $\lambda_s^{m,n}$, which form the history $\chi_t$.

\begin{theorem}
\label{th-1}
The strategy $\hat \Lambda$ with investment proportions defined by the
relation 
\begin{equation}
\hat \Lambda_t^n(\omega,\chi_{t-1}) = \int_{\R_+^N}
\frac{x^n}{\zeta_t(\omega,C_{t-1}(\omega,\chi_{t-1}))\rho_t(\omega) + |x|}
K_t(\omega,dx)\label{optimal}
\end{equation}
is relative growth optimal $($$0/0=0$ in \eqref{optimal}$)$.

Moreover, $\hat\Lambda$ is the unique relative growth optimal strategy in
the sense that if $\Lambda$ is another strategy for investor $m$ such that
its relative wealth $r_t^m$ is a submartingale for any initial capital and
strategies of the other investors, then $\hat \Lambda$ and $\Lambda$ are
equal in realization until the time $\tau = \inf\{t\ge 0 : r_t^m = 1\}$.
\end{theorem}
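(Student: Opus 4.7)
My plan splits the theorem into existence and uniqueness of $\hat\Lambda$.

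\textbf{Existence.} Substituting \eqref{optimal} into the wealth equation \eqref{capital-equation-1} and using the identity $\zeta_t/C+\sum_n\hat\lambda_t^n=1$ (a direct consequence of Lemma~\ref{lem-1}, obtained by adding $\int\rho_t\zeta_t/(\rho_t\zeta_t+|x|)\,K_t(dx)=\zeta_t/C$ and $\int|x|/(\rho_t\zeta_t+|x|)\,K_t(dx)=\sum_n\hat\lambda_t^n$), I find that when investor $m$ uses $\hat\Lambda$,
\[
\frac{r_t^m}{r_{t-1}^m}=\frac{\rho_t\zeta_t+\sum_n(\hat\lambda_t^n/\mu^n)X_t^n}{\rho_t\bar\mu\,C+|X_t|},
\]
where $C=W_{t-1}$, $\mu^n=s_t^n/C$ with $s_t^n=\sum_k\lambda_t^{k,n}Y_{t-1}^k$, and $\bar\mu=1-\sum_n\mu^n$. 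Hence the submartingale property amounts to
\[
G(\mu):=\int_{\R_+^N}\ln\frac{\rho_t\zeta_t+\sum_n(\hat\lambda_t^n/\mu^n)x^n}{\rho_t\bar\mu\,C+|x|}\,K_t(\omega,dx)\ge 0
\]
for every admissible $\mu$. At $\mu^n=\hat\lambda_t^n$ (so $\bar\mu=\zeta_t/C$) the integrand vanishes pointwise, so $G(\hat\lambda)=0$, and a short calculation using both defining identities of Lemma~\ref{lem-1} shows $\nabla G(\hat\lambda)=0$. The \textbf{main obstacle} is upgrading this critical point to a global minimum of $G$ on the admissible simplex. My preferred route is to prove the stronger inequality
\[
\int\frac{\rho_t\bar\mu\,C+|x|}{\rho_t\zeta_t+\sum_n(\hat\lambda_t^n/\mu^n)x^n}\,K_t(\omega,dx)\le 1,
\]
which is equivalent to $\E[W_t/Y_t^m\mid\F_{t-1}]\le 1/r_{t-1}^m$, and to deduce the submartingale property by Jensen's inequality $\E[\ln Z]\ge -\ln\E[1/Z]$ applied to $Z=r_t^m/r_{t-1}^m$. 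I would attempt this via a Cauchy--Schwarz decomposition that exploits the defining relation \eqref{zeta}; the fallback is a convexity analysis of $G$ after the reparametrization $u^n=1/\mu^n$.

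\textbf{Uniqueness.} Suppose $\Lambda^m$ is another strategy making $\ln r_t^m$ a submartingale. Specialize the opponent profile to the one in which every $k\ne m$ uses $\hat\Lambda$, so that $\mu^n=\lambda_t^{m,n}r_{t-1}^m+\hat\lambda_t^n(1-r_{t-1}^m)$, and regard $H(\lambda^m):=\E[\ln(r_t^m/r_{t-1}^m)\mid\F_{t-1}]$ as a function on the simplex. The same identities of Lemma~\ref{lem-1} give $\nabla H(\hat\lambda)=0$, and concavity of $H$ together with strict concavity in directions transverse to the scalar $\sum_n\lambda^{m,n}$ (standard log-optimal portfolio argument, using that $W_t$ depends on $\lambda^m$ only through this sum) yields that $\lambda^m=\hat\lambda$ is the unique maximizer of $H$, with value $H(\hat\lambda)=0$. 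Consequently, if the realizations of $\Lambda^m$ and $\hat\Lambda$ differ on a set of positive probability within $\{r_{t-1}^m<1\}$, then $H(\lambda_t^{m,\cdot})<0$ on that set, contradicting the submartingale hypothesis. Iterating in $t$ yields equality in realization until $\tau=\inf\{t:r_t^m=1\}$.
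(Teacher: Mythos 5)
Your reduction of the submartingale property to the inequality $G(\mu)\ge 0$ is a correct reformulation (the formula for $r_t^m/r_{t-1}^m$ checks out), and your first-order analysis at $\mu=\hat\lambda$ is right. But the proof stops exactly at the point you yourself call the main obstacle, and neither of your two proposed routes is carried out or obviously viable. The function $G$ is not concave: $\ln\bigl(\rho_t\zeta_t+\sum_n(\hat\lambda^n/\mu^n)x^n\bigr)$ is the logarithm of a convex function of $\mu$ (neither convex nor concave), while $-\ln(\rho_t\bar\mu C+|x|)$ is convex, so a critical point need not be a global minimum and the ``convexity analysis after reparametrization'' is not a routine fallback. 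Your stronger inequality $\E[W_t/Y_t^m\mid\F_{t-1}]\le W_{t-1}/Y_{t-1}^m$ does admit a one-line Cauchy--Schwarz proof in the case $\rho_t\equiv0$ (write $|x|=\sum_n\sqrt{\hat\lambda^nx^n/\mu^n}\sqrt{\mu^nx^n/\hat\lambda^n}$ and integrate), but the bank-account term enters the numerator and denominator asymmetrically (the cash return is not divided pro rata like an asset payoff), and the same trick leaves an uncontrolled residual integral; this asymmetry is precisely the added difficulty of the model, and it is where the defining relation \eqref{zeta} must be used quantitatively. The paper instead works with the logarithm directly: it splits $\ln(r_t/r_{t-1})=f_t(X_t)$ into a ``Gibbs'' part $g_t$ and a ``bank-account'' part $h_t$, bounds $\int g_t\,dK_t$ from below by $\tfrac14(1-r_{t-1})^2\|\lambda_t-\tilde\lambda_t\|^2+(1-r_{t-1})(|\lambda_t|-|\tilde\lambda_t|)$ via a generalized Gibbs inequality for sub-probability vectors (Lemma~\ref{lemma-logsum}), and shows via \eqref{zeta} that $\int h_t\,dK_t$ produces exactly the opposite correction $(1-r_{t-1})(|\tilde\lambda_t|-|\lambda_t|)$. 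Some substitute for this quantitative cancellation is what your argument is missing.

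The uniqueness part has two further gaps. First, the theorem assumes only that $r_t^m$ (not $\ln r_t^m$) is a submartingale, so $H(\lambda^m)<0$ does not contradict the hypothesis: one can have $\E[Z]\ge1$ with $\E[\ln Z]<0$. The paper gets around this by passing to $\ln(1-r_t^m)$, which is pushed down by Jensen applied to the submartingale $r_t^m$ and pushed up because the opponents play $\hat\Lambda$; equality then forces $\|\lambda_\sigma-\tilde\lambda_\sigma\|=0$ through the quadratic term of Lemma~\ref{lemma-logsum}. Second, ``equal in realization'' quantifies over \emph{all} opponent profiles and initial capitals, whereas you only treat the profile in which every opponent plays $\hat\Lambda$; the paper needs the switching construction (opponents mimic the original profile before the first discrepancy time $\sigma$ and switch to $\hat\Lambda$ at $\sigma$) to reach the general statement. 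Finally, the claimed concavity of $H$ suffers from the same convex-plus-concave structure as $G$ and is not justified. So while your overall architecture (first-order conditions plus a global convexity/num\'eraire argument) is a reasonable plan, the theorem's substance lies in the steps you have deferred.
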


It is easy to see that the proportion of wealth that $\hat\Lambda$ keeps in
the bank account is $1-|\hat\Lambda(\chi_{t-1})| =
\zeta_t(C_{t-1}(\chi_{t-1}))/C_{t-1}(\chi_{t-1})$. In particular, if
$\rho_t=0$ for all $t$, then $\Gamma_t=\emptyset$ and $\zeta_t = 0$. In this
case we obtain the same strategy that was found by \cite{AmirEvstigneev+13}
-- it divides the available budget between the assets proportionally to
their expected \emph{relative payoffs} $\int_\RNP x^n / |x| K_t(dx) = \E(X_t^n/|X_t|
\mid \F_{t-1})$. When $\zeta_t\neq 0$, the strategy $\hat\Lambda$ still
divides the budget between the assets proportionally to their payoffs but
the proportions are adjusted for the amount of capital kept in the bank
account.

\bigskip Before we proceed to the proof of Theorem~\ref{th-1}, let us state
one auxiliary inequality that we will use (it generalizes Gibbs'
inequality).
\begin{lemma}
\label{lemma-logsum}
Suppose $\alpha,\beta\in\RNP$ are two vectors such that
$|\alpha|,|\beta|\le 1$ and for each $n$ it
holds that if $\beta^n=0$, then also $\alpha^n=0$. Then 
\begin{equation}
\alpha(\ln\alpha - \ln \beta) \ge \frac{\|\alpha-\beta\|^2}{4} + |\alpha|-|\beta|,\label{logsum-1}
\end{equation}
where we put $\alpha^n(\ln \alpha^n-\ln \beta^n) = 0$ if $\alpha^n=0$.
\end{lemma}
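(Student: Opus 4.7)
The plan is to reduce \eqref{logsum-1} to a one-variable inequality, since both sides split as sums over the index $n$. Define
\[
g(a,b) = a\ln(a/b) - a + b, \qquad a,b\in[0,1],
\]
with the conventions $0\ln 0 = 0$ and $g(0,b) = b$ for $b>0$ (the case $a>0$, $b=0$ is ruled out by the hypothesis). Then
\[
\alpha(\ln\alpha - \ln\beta) - (|\alpha| - |\beta|) = \sum_n g(\alpha^n, \beta^n),
\]
so it suffices to prove the coordinate-wise bound $g(a,b) \ge (a-b)^2/2$ for all admissible $a,b\in[0,1]$. Note this is a factor of two sharper than required, but the slack is harmless and gives the stated constant $1/4$ immediately after summing.

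To establish the coordinate-wise bound when $a,b\in(0,1]$, I apply Taylor's theorem to the convex function $f(x) = x\ln x$. Its second derivative is $f''(x) = 1/x \ge 1$ on $(0,1]$, so $f$ is strongly convex with constant $1$ on that interval, yielding
\[
f(a) - f(b) - f'(b)(a-b) \ge \frac{(a-b)^2}{2}.
\]
Expanding the left-hand side using $f'(b) = \ln b + 1$ gives exactly $a\ln(a/b) - a + b = g(a,b)$, which is the desired inequality.

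The degenerate cases are checked directly: if $a=0$ and $b\in(0,1]$, then $g(0,b) = b \ge b^2/2 = (a-b)^2/2$ because $b\le 1$; and if $a=b=0$ both sides vanish. Summing over $n$ yields $\sum_n g(\alpha^n,\beta^n) \ge \|\alpha-\beta\|^2/2 \ge \|\alpha-\beta\|^2/4$, which rearranges to \eqref{logsum-1}. There is no real obstacle: the only essential ingredient is that each coordinate lies in $[0,1]$, which is precisely what $|\alpha|,|\beta|\le 1$ delivers, and which in turn is what makes the uniform lower bound $f''\ge 1$ available on the segment between $a$ and $b$.
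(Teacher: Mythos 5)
Your proof is correct, and it takes a genuinely different route from the paper's. The paper (following Lemma~2 of \cite{AmirEvstigneev+13}) works through square roots: it applies $\ln x \le 2(\sqrt{x}-1)$ to get $\alpha(\ln\alpha-\ln\beta) \ge \sum_n(\sqrt{\alpha^n}-\sqrt{\beta^n})^2 + |\alpha|-|\beta|$, a Hellinger-type bound, and then converts via $(\sqrt{x}-\sqrt{y})^2 \ge (x-y)^2/4$ on $[0,1]$, which is where the constant $1/4$ comes from. You instead decompose coordinate-wise into the Bregman divergence $g(a,b)=f(a)-f(b)-f'(b)(a-b)$ of $f(x)=x\ln x$ and invoke strong convexity ($f''\ge 1$ on $(0,1]$, which is exactly what $|\alpha|,|\beta|\le 1$ guarantees for the segment between $a$ and $b$), handling the boundary cases $a=0$ directly. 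Your expansion of the Bregman divergence to $a\ln(a/b)-a+b$ is exact, and the degenerate cases match the paper's convention $\alpha^n(\ln\alpha^n-\ln\beta^n)=0$ for $\alpha^n=0$. A small bonus of your route is that it actually proves the inequality with the sharper constant $1/2$ in place of $1/4$; the paper's square-root detour loses a factor of two at the final step, though the weaker constant is all that is needed downstream in the proof of Theorem~\ref{th-1}.
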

\begin{proof}
We follow the lines of the proof of Lemma~2 in~\cite{AmirEvstigneev+13},
which establishes the above inequality in the case $|\alpha|=|\beta|=1$.
Using that $\ln x \le 2(\sqrt x -1)$ for any $x>0$, we obtain
\[
\begin{split}
\alpha(\ln\alpha-\ln\beta) &= -\sum_{n\,:\,\alpha^n\neq 0}\alpha^n\ln(\beta^n/\alpha^n) \ge
2\sum_n (\alpha^n-\sqrt{\alpha^n\beta^n}) \\&= \sum_n(\sqrt{\alpha^n} -
\sqrt{\beta^n})^2 + |\alpha| - |\beta|.
\end{split}
\]
Then we can use the inequality $(\sqrt x - \sqrt y)^2 \ge (x-y)^2/4$, which
is true for any $x,y\in[0,1]$, and obtain \eqref{logsum-1}. 
\end{proof}

\begin{proof}[Proof of Theorem~\ref{th-1}]
Without loss of generality, assume that the strategy $\hat\Lambda$ is used
by investor 1. Let $\lambda_t$ denote the realization of this strategy, and
$\tilde \lambda_t$ the realization of the representative strategy of the
other investors, which we define as the following weighted sum of the
realizations of their strategies:
\begin{equation}
\tilde \lambda_t^n = \sum_{m\ge 2} \frac{r_{t-1}^m}{1-r_{t-1}^1}\lambda_t^{m,n},\label{repr-realization}
\end{equation}
where $\tilde \lambda_t = 0$ if $r_{t-1}^1=1$. By $Y_t$ we will denote the
wealth of investor 1, by $\tilde Y_t:=\sum_{m\ge2} Y_t^m$ the total wealth
of the other investors, and by $r_t = Y_t/(Y_t+\tilde Y_t)$ the relative
wealth of investor 1. Then $Y_t$ satisfies the following relation, which
follows from \eqref{capital-equation-1}:
\begin{equation}
Y_t = \rho_t(1-|\lambda_t|)Y_{t-1} + \sum_n \frac{\lambda_t^{n}
Y_{t-1}}{\lambda_t^{n}Y_{t-1} + \tilde \lambda_t^{n} \tilde Y_{t-1}}
X_t^n.\label{eq-proof-1}
\end{equation}
Observe that from the definition of $\hat\Lambda$ and
condition~\eqref{positive-condition}, it follows that for each $t$ and
almost all $\omega$ we have
\begin{equation}
K_t(\omega, \{x: \rho_t(\omega)(1-|\lambda_t(\omega)|)+x
\lambda_t(\omega)=0\})=0\label{proof-K}.
\end{equation}
In particular, this implies that $Y_t>0$ a.s. for all $t$.

Introduce the predictable sequence of random vectors $F_t$ with values in
$\R_+^N$ which have the components
\[
F_t^n = \frac{\lambda_t^n}{r_{t-1}\lambda_t^n + (1-r_{t-1}) \tilde
\lambda_t^n},
\]
where $0/0=0$. Denoting the total wealth of the
investors by $W_t=Y_t+\tilde Y_t$, the equation \eqref{eq-proof-1} can be
rewritten as
\begin{equation}
\label{proof-Y}
Y_t = \biggl(\rho_t(1- |\lambda_t|) + \frac{F_t X_t}{W_{t-1}}\biggr) Y_{t-1}.
\end{equation}
A similar equation is true for $\tilde Y_t$, namely, $\tilde Y_t =
(\rho_t(1- |\tilde \lambda_t|) + \frac{\tilde F_t X_t}{W_{t-1}} )\tilde
Y_{t-1}$, where $\tilde F_t^n = \tilde \lambda_t^n/(r_{t-1}\lambda_t^n +
(1-r_{t-1}) \tilde \lambda_t^n)$. Using this, we obtain
\begin{equation}
\label{proof-W}
W_t = \biggl(\rho_t(1 - r_{t-1}|\lambda_t| - (1-r_{t-1}) |\tilde\lambda_t|)
+ \frac{|X_t|}{W_{t-1}}\biggr)W_{t-1}.
\end{equation}
In this equation we used the equality $(r_{t-1}F_t^n+(1-r_{r-1})\tilde F_t^n)X_t^n =
X_t^n$: on the set $\{\lambda_t^n>0\}$ this is clear from the definition of
$F_t^n$ and $\tilde F_t^n$, while on the set $\{\lambda_t^n=0\}$ we have
$X_t^n=0$ a.s., which follows from the construction of $\hat \Lambda$.

Let $\zeta_t(\omega)$ denote the predictable sequence
$\zeta_t(\omega,W_{t-1}(\omega))$. As follows from the definition of
$\hat\Lambda$, we have $|\lambda_t| = 1- \zeta_t/W_{t-1}$. Let $\tilde
\zeta_t = (1-|\tilde \lambda_t|) W_{t-1}$.
\label{r-ratio}
Then dividing \eqref{proof-Y} by \eqref{proof-W} we find that $\ln r_t - \ln
r_{t-1} = f_t(X_t)$, where $f_t = f_t(\omega,x)$ is the
$\F_{t-1}\otimes\B(\R_+^N)$-measurable function
\[
f_t(x) = \ln \biggl(\frac{\zeta_t\rho_t + F_tx}{r_{t-1}\zeta_t\rho_t +
(1-r_{t-1})\tilde\zeta_t\rho_t + |x|}\biggr)
\]
(for brevity, the argument $\omega$ will be omitted). Note that
\eqref{proof-K} implies $\zeta_t\rho_t + F_tX_t > 0$ a.s., hence we can
define the value of $f_t(\omega,x)$ for $x$ such that
$\zeta_t(\omega)\rho_t(\omega) + F_t(\omega)x=0$ in an arbitrary way. It
will be convenient to put $f_t(\omega,x) = 0$ for such $x$.

To show that $\ln r_t$ is a submartingale, it will be enough to show that
$\int_\RNP f_t(x) K_t(dx) \ge 0$, i.e. $\E( f_t(X_t) \mid \F_{t-1}) \ge 0$.
Indeed, then $\ln r_t$ will be a generalized submartingale\footnote{Recall
that a sequence $S_t$ is called a generalized submartingale if $\E
|S_0|<\infty$ and $\E(S_t\mid\F_{t-1})\ge S_{t-1}$ for all $t\ge 1$ (but not
necessarily $\E|S_t|<\infty$). It is easy to show that if $S_t\le C_t$ for
all $t$ with some integrable random variables $C_t$, then $S_t$ is
integrable, and hence a usual submartingale.}, but
since it is bounded from above (by 0), this will also imply that $\ln r_t$
is a usual submartingale (see Chapter~7.1 in \cite{Shiryaev19en}).

Suppose for some $t,\omega$, a vector $x$ is such that
$\zeta_t(\omega)\rho_t(\omega) + F_t(\omega)x>0$, and, for each $n$, the
equality $F_t^n(\omega) = 0$ implies $x^n=0$. Then we have the bound
\[
\begin{split}
f_t(x) &= \ln\biggl(\frac{\zeta_t\rho_t + F_tx}{\zeta_t\rho_t + |x|}\biggr) +
\ln\biggl(\frac{\zeta_t\rho_t + |x|}{r_{t-1}\zeta_t\rho_t + (1-r_{r-1})\tilde\zeta_t\rho_t +
|x|}\biggr) \\ &\ge \frac{x\ln F_t}{\zeta_t\rho_t + |x|} +
\frac{(1-r_{t-1})(\zeta_t-\tilde\zeta_t)\rho_t}{\zeta_t\rho_t + |x|} := g_t(x) + h_t(x),
\end{split}
\]
where we put $x^n\ln F_t^n = 0$ if $F_t^n=0$. Here, for the first term in
the second line we used the concavity of the logarithm, and for the second
term the inequality $\ln a \ge 1-a^{-1}$.

For each $t$, we have $K_t(\{x: \zeta_t\rho_t+x F_t=0\})=0$ a.s.
by~\eqref{proof-K}, and also $K_t(\{x^n=0\})=1$ a.s. on the set
$\{F_t^n=0\}$ by~the definition of $\hat\Lambda$. Hence
\begin{equation}
\label{proof-int-f}
\int_\RNP f_t(x) K_t(dx) \ge\int_\RNP g_t(x) K_t(dx) + \int_\RNP h_t(x)
K_t(dx) := I_t^g + I_t^h.
\end{equation}
For the  integral $I_t^g$, using Lemma~\ref{lemma-logsum}, we find
\begin{equation}
\label{proof-g}
\begin{split}
I_t^g &= \lambda_t \ln F_t = \lambda_t( \ln \lambda_t -
\ln(r_{t-1}\lambda_t + (1-r_{t-1})\tilde\lambda_t)) \\&\ge
\frac14 (1-r_{t-1})^2\|\lambda_t-\tilde\lambda_t\|^2 + (1-r_{t-1})(|\lambda_t| - |\tilde \lambda_t|).
\end{split}
\end{equation}
For the integral $I_t^h$, on the set $\{\omega: (\omega,W_{t-1}(\omega)) \in
\Gamma_t\}$ we can use the equality $\int_\RNP \rho_t W_{t-1}/(\zeta_t\rho_t
+ |x|) K_t(dx) = 1$, and on its complement the equality $\zeta_t=0$ and
inequality $\int_\RNP \rho_t W_{t-1}/|x| K_t(dx) \le 1$, which result in
\begin{equation}
\label{proof-h}
I_t^h \ge
{(1-r_{t-1})(\zeta_t-\tilde\zeta_t)}/{W_{t-1}} = (1-r_{t-1})
(|\tilde\lambda_t| - |\lambda_t|).
\end{equation}
Relations \eqref{proof-int-f}--\eqref{proof-h} imply $\int_\RNP f_t(x)
K_t(dx) \ge 0$, so $\ln r_t$ is a submartingale, which proves that
$\hat\lambda$ is relative growth optimal.

In order to prove the statement about uniqueness, suppose there exists
another strategy $\Lambda'$ (without loss of generality, assume this is a
strategy of investor 1) with relative wealth $r_t^1$ being a submartingale
for any initial capital and strategies of the other investors, and there
exist strategies $\Lambda^2,\ldots,\Lambda^M$ and a vector of initial
capital $Y_0$ such that the realizations of $\hat\Lambda$ and $\Lambda'$ are
different for the strategy profiles $\hat
L=(\hat\Lambda,\Lambda^2,\ldots,\Lambda^M)$ and
$L'=(\Lambda',\Lambda^2,\ldots,\Lambda^M)$, i.e. $\P(\hat\lambda_t \neq
\lambda'_t)>0$ for some $t$, where $\hat\lambda_t(\omega)$,
$\lambda_t'(\omega)$ are the realizations of the strategy of the first
investor in the markets $\hat L$ and $L'$, respectively.

Consider the predictable stopping time $\sigma = \inf \{t \ge 1 :
\hat\lambda_t \neq \lambda'_t\}$ and define the new strategies $\tilde
\Lambda^m$, $m\ge 2$, by
\[
\tilde\Lambda_t^{m} (\omega, \chi_{t-1}) =
(\Lambda')_t^m(\omega,\chi_{t-1}) \I(t<\sigma(\omega)) +\hat
\Lambda_t^m(\omega, \chi_{t-1})\I(t\ge\sigma(\omega)),
\]
where $\chi_{t-1} = (y_0,\lambda_1,\ldots,\lambda_{t-1})$ denotes history of
the market until $t-1$. Observe that since $\{\sigma\le t\} \in \F_{t-1}$,
the strategies are well-defined (i.e. $\F_{t-1}\otimes \B(\R^M_+\times
\Delta^{t-1})$-measurable).

Let $r_t(\omega)$ denote the realization $r_t^1(\omega,Y_0,\tilde L)$ of the
relative wealth of investor 1 when the investors use the strategy profile
$\tilde L = (\Lambda', \tilde\Lambda^2,\ldots,\tilde\Lambda^M)$. Then, on
one hand, on the set $\{\sigma<\infty\}$ we have $\E(r_{\sigma}\mid
\F_{\sigma-1}) \ge r_{\sigma-1}$ by the choice of $\Lambda'$, and hence
$\E(\ln(1-r_{\sigma})\mid \F_{\sigma-1}) \le \ln(1-r_{\sigma-1})$ by
Jensen's inequality. On the other hand, $\E(\ln(1 - r_{\sigma})\mid
\F_{\sigma-1}) \ge \ln(1-r_{\sigma-1})$ since investors $m\ge 2$ use the
strategy $\hat\Lambda$ after $\sigma$. Hence, on the set $\{\sigma<\infty,\;
r_{\sigma-1}<1\}$ we have
\begin{equation}
0 = \E\biggl( \ln \frac{1 - r_{\sigma}}{1 - r_{\sigma-1}}\;\bigg|\; \F_{\sigma-1}\biggr) \ge
 \frac14 r_{\sigma-1}^2\|\lambda_{\sigma}-\tilde\lambda_{\sigma}\|^2,\label{proof-uniq}
\end{equation}
where the inequality can be obtained from \eqref{proof-g} and
\eqref{proof-h}. In this formula, $\lambda_\sigma$ and
$\tilde\lambda_\sigma$ are the realizations of investor 1's strategy
$\Lambda'$ and the representative strategy of investors $m\ge 2$ at time
$\sigma$ in the market with the strategy profile $\tilde L$ . It is not hard
to see that on the set $\{\sigma<\infty\}$ they are equal, respectively, to
$\lambda'_\sigma$ and $\hat\lambda_\sigma$. Consequently, \eqref{proof-uniq}
and the choice of $\sigma$ imply that if the set $\{\sigma<\infty,
r_{\sigma-1}<1\}$ had positive probability, then $r_{\sigma-1}=0$ a.s. on
it. But this is impossible since up to $\sigma-1$ the realization of the
strategy $\Lambda'$ coincides with the realization of $\hat\Lambda$, and
hence its relative wealth stays positive. Thus, $\P(\sigma<\infty,
r_{\sigma-1}<1) = 0$, which proves the claimed uniqueness.
\end{proof}

\section{The dominance property of a relative growth optimal strategy}
\label{sec-dominance}
The next short result shows that the relative wealth of the relative
growth optimal strategy tends to 1 on the set of outcomes $\omega$ for which
the realization $\tilde\lambda$ of the representative strategy of the other
investors is asymptotically different from the realization of $\hat\Lambda$
in a certain sense ($\tilde \lambda$ is defined in \eqref{repr-realization}
above). This result can be viewed as asymptotic uniqueness of a survival
strategy.

\begin{theorem}
\label{th-2}
Suppose investor 1 uses the relative growth optimal strategy. Let
$\hat\lambda_t$ denote its realization, $\tilde \lambda_t$ denote the
realization of the representative strategy of the other investors, and
$\Omega'\in \F$ be the set
\[
\Omega' = \Bigl\{\omega : \sum_{t\ge 1}
\|\hat\lambda_t(\omega)-\tilde\lambda_t(\omega)\|^2 = \infty\Bigr\}.
\]
Then $r_t^1(\omega) \to 1$ a.s.\ on $\Omega'$. In particular,
$\|\hat\lambda_t(\omega)-\tilde\lambda_t(\omega)\| \to 0$ a.s.\ on the set
$\{\omega: \lim_{t\to\infty}r_t^1(\omega) < 1\}$.
\end{theorem}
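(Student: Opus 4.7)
The plan is to extract from the proof of Theorem~\ref{th-1} a clean quantitative version of the submartingale property of $\ln r_t^1$ and combine it with a.s.\ convergence arguments. Adding the lower bounds \eqref{proof-g} and \eqref{proof-h} inside \eqref{proof-int-f}, the boundary terms $\pm(1-r_{t-1}^1)(|\hat\lambda_t|-|\tilde\lambda_t|)$ cancel, yielding the estimate
\begin{equation*}
\E(\ln r_t^1 - \ln r_{t-1}^1 \mid \F_{t-1}) \ge \frac{1}{4} (1-r_{t-1}^1)^2\,\|\hat\lambda_t - \tilde\lambda_t\|^2 \qquad\text{a.s.}
\end{equation*}
On $\{r_{t-1}^1 = 1\}$ both sides vanish, using the convention $\tilde\lambda_t = 0$ there and the fact that investor~1 retains $r_t^1 = 1$ once the competitors are driven out.

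Next I would invoke the Doob decomposition $\ln r_t^1 = M_t + A_t$. Since $\ln r_t^1 \le 0$ is an integrable submartingale (the footnote in the proof of Theorem~\ref{th-1} applies), the predictable non-decreasing compensator $A_t$ satisfies $\E A_t = \E \ln r_t^1 - \ln r_0^1 \le -\ln r_0^1 < \infty$ for every~$t$, and monotone convergence gives $\E A_\infty \le -\ln r_0^1 < \infty$. Hence $A_\infty < \infty$ a.s. Combining with the previous increment bound yields the key summability
\begin{equation*}
\sum_{t=1}^{\infty} (1 - r_{t-1}^1)^2\,\|\hat\lambda_t - \tilde\lambda_t\|^2 < \infty \qquad\text{a.s.}
\end{equation*}

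Finally, as $\ln r_t^1$ is a submartingale bounded above by $0$, it converges a.s.\ to a finite limit, so $r_t^1 \to r_\infty^1 \in (0,1]$ a.s. On $\{r_\infty^1 < 1\}$ we eventually have $(1-r_{t-1}^1)^2 \ge (1-r_\infty^1)^2/4 > 0$, so the summability above forces $\sum_t \|\hat\lambda_t - \tilde\lambda_t\|^2 < \infty$ on that set. Taking contrapositives gives $\Omega' \subseteq \{r_\infty^1 = 1\}$ a.s., which is the first claim. The ``in particular'' assertion is immediate: whenever $\sum_t \|\hat\lambda_t - \tilde\lambda_t\|^2 < \infty$, the general term tends to zero.

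The main (mild) obstacle is simply spotting the cancellation of the $(|\hat\lambda_t|-|\tilde\lambda_t|)$ terms that promotes the estimates in the proof of Theorem~\ref{th-1} from ``non-negative in conditional expectation'' to a genuinely quantitative lower bound in $\|\hat\lambda_t - \tilde\lambda_t\|^2$. Once that is in hand, everything else is a routine application of the Doob decomposition and martingale convergence for bounded submartingales.
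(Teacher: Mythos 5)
Your proof is correct and follows essentially the same route as the paper: the paper likewise extracts the bound $C_t - C_{t-1} \ge \frac14(1-r_{t-1})^2\|\hat\lambda_t-\tilde\lambda_t\|^2$ on the compensator increments from \eqref{proof-g} and \eqref{proof-h}, notes that the compensator of the non-positive submartingale $\ln r_t^1$ converges a.s., and concludes that $r_t\to1$ on $\Omega'$ since otherwise the compensator would diverge. Your write-up merely spells out the Doob-decomposition and monotone-convergence details that the paper leaves implicit.
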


\begin{proof}
We will use the same notation as in the proof of Theorem~\ref{th-1}. Since
$\ln r_t$ is a non-positive submartingale, it converges a.s.\ and its
compensator $C_t$ converges a.s.\ as well. From the proof of
Theorem~\ref{th-1}, it follows that
\[
C_{t} - C_{t-1} = \int_\RNP f_t(x) K_t(dx) \ge
\frac14 (1-r_{t-1})^2\|\hat\lambda_t-\tilde\lambda_t\|^2.
\]
Then on the set $\Omega'$ we necessarily have $r_{t} \to 1$ a.s., since
otherwise $C_t$ would diverge.
\end{proof}

\section{Maximization of the growth rate of wealth}
\label{sec-other-properties}
Recall that in a single-investor market model a \emph{num\'eraire portfolio} is a
strategy such that the ratio of the wealth of any other strategy to the
wealth of this strategy is a supermartingale. The term ``num\'eraire
portfolio'' was introduced by \cite{Long90}; often it is also called a
growth optimal strategy, or a benchmark portfolio (see
\cite{HakanssonZiemba95,KaratzasKardaras07,Platen06}). It is well-known that
num\'eraire portfolios have a number of optimality properties: they maximize
the asymptotic growth rate of wealth, maximize the expected logarithmic
utility, minimize the time to reach a given wealth level, etc.\ (see, e.g.,
\cite{Breiman61,AlgoetCover88} for results in discrete time, and
\cite{KaratzasKardaras07} for results in a general semimartingale model,
including a connection with the arbitrage theory). In this section we will
show that the relative growth optimal strategy in our model has similar
properties.

By the asymptotic growth rate of the wealth $Y_t$  we call
$\limsup_{t\to\infty} \frac1t\ln Y_t$ (see, e.g., Chapter~3.10 in
\cite{KaratzasShreve98}), and the $t$-step growth rate at time $s$ can be
defined as $\frac1t \E (\ln \frac{Y_{s+t}}{Y_s} \mid \F_{s})$. These notions
have especially clear interpretation in a single-investor model with i.i.d.\ 
asset returns: then the log-returns of a growth optimal strategy are i.i.d.\ 
as well, and the asymptotic growth rate and the $t$-step growth rate are
equal and non-random.

\begin{theorem}
\label{th-3}
Suppose investor 1 uses the relative growth optimal strategy and the other
investors use arbitrary strategies. Then for any vector of initial capital
$Y_0$ the following claims are true.

\noindent
1) Investor 1 maximizes the asymptotic growth rate of wealth: for any $m$
\begin{equation}
\limsup_{t\to\infty} \frac1t{\ln Y_t^1} \ge \limsup_{t\to\infty} \frac1t{\ln
Y_t^m}\ \text{a.s.}\label{opt-1}
\end{equation}
2) Suppose there are only two investors ($M=2$) and $\E|X_t|<\infty$ for all
$t$. Then investor 1 maximizes the $t$-step growth rate of wealth: for any
$t,s\ge 0$
\begin{equation}
\E\biggl( \ln\frac{Y_{s+t}^1}{Y_s^1}\midd \F_s\biggr) \ge \E\biggl(
\ln\frac{Y_{s+t}^2}{Y_s^2}\midd \F_s\biggr)\ \text{a.s.}\label{opt-2}
\end{equation}
(In inequalities~\eqref{opt-1},~\eqref{opt-2}, the both sides may assume the
values $\pm\infty$; in \eqref{opt-2} we put $\ln0/0 = -\infty$.)

\end{theorem}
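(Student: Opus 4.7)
The plan is to use the decomposition $\ln Y_t^m = \ln r_t^m + \ln W_t$ to reduce both parts to statements about the relative wealth $r_t^m$, where the submartingale property of $\ln r_t^1$ established in Theorem~\ref{th-1} can be applied.

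For part~1, since $\ln r_t^1$ is a non-positive submartingale it converges a.s.\ to a finite limit $l\le 0$ (as was already noted in connection with the survival property). Hence $\frac{1}{t}\ln r_t^1\to 0$ a.s., and the decomposition yields
\[
\limsup_{t\to\infty}\frac{1}{t}\ln Y_t^1 = \limsup_{t\to\infty}\frac{1}{t}\ln W_t\ \text{a.s.}
\]
On the other hand, $Y_t^m\le W_t$ for every $m$ trivially gives $\limsup_{t\to\infty}\frac{1}{t}\ln Y_t^m\le \limsup_{t\to\infty}\frac{1}{t}\ln W_t$. Chaining these two bounds yields \eqref{opt-1}. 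No integrability hypothesis is needed for part~1.

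For part~2, the key observation is that the $\ln W$ contributions cancel in the difference of the two sides of \eqref{opt-2}:
\[
\ln\frac{Y_{s+t}^1}{Y_s^1}-\ln\frac{Y_{s+t}^2}{Y_s^2} = \ln\frac{r_{s+t}^1}{r_s^1} - \ln\frac{r_{s+t}^2}{r_s^2}.
\]
Hence it suffices to verify $\E(\ln r_{s+t}^1\midd\F_s)-\ln r_s^1 \ge \E(\ln r_{s+t}^2\midd\F_s)-\ln r_s^2$ a.s. The left-hand side is $\ge 0$ by iterating the submartingale property of $\ln r_t^1$. For the right-hand side, since $M=2$ one has $r_t^2=1-r_t^1$; using the fact already noted in the paper that $r_t^1$ itself is a submartingale (by Jensen's inequality applied to the convex function $\exp$), one gets $\E(r_{s+t}^2\midd\F_s)=1-\E(r_{s+t}^1\midd\F_s)\le 1-r_s^1=r_s^2$; a further application of Jensen's inequality, this time to the concave function $\ln$, yields $\E(\ln r_{s+t}^2\midd\F_s)\le\ln\E(r_{s+t}^2\midd\F_s)\le\ln r_s^2$. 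So the right-hand side is $\le 0$, which proves \eqref{opt-2}.

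The main technical obstacle is to justify the algebraic split of \eqref{opt-2} into its two $m$-components, since $\ln W_{s+t}$ need not be integrable. Here the hypothesis $\E|X_t|<\infty$ enters: iterating the bound $W_t\le \rho_tW_{t-1}+|X_t|$ that is read off from~\eqref{proof-W} and pulling each predictable factor $\rho_j$ through the appropriate conditional expectation shows $\E(W_{s+t}\midd\F_s)<\infty$ a.s., which controls the positive parts $(\ln Y_{s+t}^m)^+\le(\ln W_{s+t})^+\le W_{s+t}$, so both conditional expectations in \eqref{opt-2} are well-defined in $[-\infty,\infty)$ a.s. The degenerate events $\{Y_s^2=0\}$ and $\{Y_{s+t}^2=0\}$, the latter being absorbed by the wealth dynamics, make the right-hand side of \eqref{opt-2} equal $-\infty$ under the convention $\ln 0/0=-\infty$, so the inequality is trivial on them.
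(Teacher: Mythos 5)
Your proof is correct and follows essentially the same route as the paper: part 1 rests on the survival property ($r_t^1$ bounded away from zero, so the relative-wealth term contributes nothing to the growth rate), and part 2 uses the decomposition $\ln Y_t^m=\ln r_t^m+\ln W_t$ together with the submartingale property of $\ln r_t^1$ and the supermartingale property of $r_t^2=1-r_t^1$ plus Jensen, with $\E|X_t|<\infty$ controlling the $\ln W$ term. The paper phrases part 2 as comparing each side to $\E(\ln(W_{s+t}/W_s)\mid\F_s)$ rather than cancelling it, but that is the same argument, and your treatment of the $-\infty$ and $Y^2=0$ degeneracies matches the paper's in substance.
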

\begin{proof}
1) As was noted above, if investor $1$ uses the relative growth optimal
strategy, then $\inf_t r_t^1>0$ a.s., and hence $\sup_t W_t/Y_t^1 < \infty$.
Therefore, $\sup_t Y_t^m/Y_t^1 < \infty$ for any $m$. This implies that for
any sequence $T_t$ such that $\lim_{t\to\infty} T_t = +\infty$ (in
particular, for $T_t=t$) we have the inequality
\[
\limsup_{t\to \infty} \frac{1}{T_t}\ln \frac{Y_t^m}{Y_t^1} \le 0\ \text{ a.s.}
\]
From here, one can obtain~\eqref{opt-1}.

2) From the condition $\E |X_t| < \infty$, it follows that $\E \ln W_t <
+\infty$ for all~$t$. On the set $\{\omega:\E(\ln W_{s+t} \mid \F_s)(\omega)
= -\infty\}$ we have $\E(\ln Y_{s+t}^1 \mid \F_s) = \E(\ln Y_{s+t}^2 \mid
\F_s)= -\infty$, so inequality \eqref{opt-2} holds on this set. On the set
$\{\omega:\E(\ln W_{s+t} \mid \F_s)(\omega) > -\infty\}$, use the
submartingale property of $\ln r_t^1$, which implies
\[
\E \biggl(\ln \frac{Y_{s+t}^1}{Y_s^1}  \midd \F_s\biggr) \ge 
\E\biggl(\ln\frac{W_{s+t}}{W_s} \midd \F_s \biggr).
\]
Since $r_t^2=1-r_t^1$ is a supermartingale, $\ln r_t^2$ is a generalized
supermartingale, so, in a similar way,
\[
\E \biggl(\ln \frac{Y_{s+t}^2}{Y_s^2}  \midd \F_s\biggr) \le 
\E\biggl(\ln\frac{W_{s+t}}{W_s} \midd \F_s \biggr),
\]
which proves \eqref{opt-2}. 
\end{proof}

\begin{remark}
1. It is clear from the proof that \eqref{opt-1} holds if investor 1 uses
any survival strategy.

2. Note that the second claim of Theorem~\ref{th-3} generally does not hold
in the case $M\ge 3$. For example, it can happen that investor 1 uses the
strategy $\hat\Lambda$, investor 2 acts in an unoptimal way, and investor 3
manages to find a strategy which is better than $\hat \Lambda$.

A simple example can be constructed even for a non-random market. Let $M=3$,
$N=1$, $Y_0^m=1$ for $m=1,2,3$, and $\rho_1=1$, $X_1=1$. Then we have
$\hat\lambda_1 = 1/3$. However, if $\lambda^1_1=\hat\lambda_1$ and
$\lambda^2_1 = 1$, then the strategy $\lambda^3_1 = 0$ turns out to be
better than $\lambda_1^1$ after one step: $Y_1^1=11/12$, but $Y_1^3=1$.
\end{remark}

\section{Growth of wealth when investors use the relative
growth optimal strategy}
\label{sec-wealth}
It is interesting to observe that using the relative growth optimal
strategy does not necessarily imply ``favorable'' asymptotics of the
absolute wealth.

We begin with an example which shows that the wealth of an investor who uses
the strategy $\hat\Lambda$ may vanish asymptotically because the other
investors use ``bad'' strategies such that the total wealth $W_t$ vanishes.
At the same time, there is a strategy the wealth of which does not vanish,
but it is not relative growth optimal. In the second part of this section,
we investigate the case when all the investors use the relative growth
optimal strategy; in that case their wealth will normally grow.

\begin{example}
We consider a non-random model with two investors and one asset. Suppose the
investors have the initial capital $Y^1_0=Y^2_0 = 1$ and use the strategies
that invest the proportions $\lambda_{t+1}^1 = \frac12$ and $\lambda_{t+1}^2
= \frac12+\frac1{2t}$ in the asset in each time period. Suppose $\rho_t=1$
for all $t$, and the (non-random) payoff sequence $X_t$ is defined by
\[
X_{t+1} = \frac{Y_t^1+Y_t^2}{2}.
\]
This equation together with the wealth equation \eqref{capital-equation-1}
uniquely define the sequences $Y_t^1$, $Y_t^2$, and $W_t = Y_t^1 + Y_t^2$:
\begin{align}
&Y_{t+1}^1 = Y_t^1\biggl(\frac12 + \frac{X_{t+1}}{Y_t^1 +
(1+\frac1t)Y_t^2}\biggr),\quad Y_{t+1}^2 = Y_t^2\biggl(\frac{t-1}{2t} +
\frac{(1+\frac1t)X_{t+1}}{Y_t^1 +
(1+\frac1t)Y_t^2}\biggr),\notag\\
&W_{t+1} = W_t - \frac{Y_t^2}{2t}.\label{ex-W}
\end{align}
It is easy to see that we have $\zeta_t(W_{t-1}) = W_{t-1}/2$, and hence
$\hat \lambda_t = 1/2$. Thus, the strategy of the first investor is relative
growth optimal.

Let, as always, $r_t^m = Y_t^m/W_t$, $m=1,2$. According to
Theorem~\ref{th-1}, there exists the limit $r_\infty^2 = \lim_{t\to\infty}
r_t^2 \in [0,1)$. We will now show that $r_\infty^2>0$ and $W_\infty =0$.
From~\eqref{ex-W}, we find
\[
r_{t+1}^2 = r_t^2 (1-\alpha_t), \qquad \text{where}\ \alpha_t =
\frac{r_t^2(1-r_t^2)}{2t^2 + tr_t^2 - (r_t^2)^2}.
\]
It is easy to see that $\alpha_t\in(0,1)$ and $\alpha_t = O(t^{-2})$ as $t\to\infty$.
Hence, there exists the limit $r_\infty^2>0$. Also, from \eqref{ex-W} we
have $W_{t+1} = W_t(1- {r_t^2}/(2t))$.
Since $\sum_t {r_t^2}/{t} =\infty$, we have $W_\infty=0$.

However, there is a trivial strategy that guarantees that the wealth does
not vanish: $\lambda_t=0$ for all $t$. \hfill$\square$
\end{example}

Now we turn to analysis of the situation when all the investors use the
relative growth optimal strategy. Obviously, in this case the relative
wealth of the investors will stay constant. Our goal will be to
investigate the asymptotics of the total wealth $W_t$.

To avoid uninteresting complications, let us assume from now on that
$\rho_t(\omega)>0$ for all $t,\omega$. Introduce the discounting sequence
\[
D_t = \rho_1\cdot\ldots\cdot\rho_t, \qquad D_0=1,
\]
and denote by $W_t' = W_t/D_t$ the discounted wealth of the investors, and
by $X_t' = X_t/D_t$ the discounted payoffs.

First, we will show that $W_t'$ does not decrease in the sense that $W_t'$
is a generalized submartingale. But then one can ask the question: will
$W_t'$ asymptotically grow to infinity (provided that $\sum_t |X_t| =
\infty$)? The answer turns out to be quite interesting. We consider it only
in the case when the discounted payoffs $X_t'$ are i.i.d., and show that
$W_t'\to\infty$ if $X_t$ are truly random (i.e. the support of the
distribution of $X_t$ contains more than one point), while $W_t'$ stays
bounded if $X_t'$ are non-random.

\begin{theorem}
\label{th-4}
Suppose all the investors use the strategy $\hat\Lambda$. Then the
following claims are true.

1) The sequence $1/W_t'$ is a
supermartingale, $W_t'$ is a generalized submartingale, and
there exists the limit $ W_\infty' := \lim_{t\to\infty} W_t' \in(0,\infty]$
a.s.

2) Assume additionally that $\rho_t=\rho>0$ for all $t$, where $\rho$ is a
constant; $X_t'$ is a sequence of i.i.d.\ random vectors; and the filtration
$\FF$ is generated by $X_t$, i.e.\ $\F_t= \sigma(X_s,s\le t)$. If $X_t$ are
not equal to a constant vector a.s., then $ W_\infty' = \infty$ a.s.;
otherwise $W_t'=W_0\vee |X_1|/\rho$ for all $t\ge 1$.
\end{theorem}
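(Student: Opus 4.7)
When every investor uses $\hat\Lambda$, the individual positions coincide and the sum $\sum_k\lambda_t^{k,n}Y_{t-1}^k$ collapses to $\hat\Lambda_t^n W_{t-1}$. Substituting into the wealth equation and using $(1-|\hat\Lambda_t|)W_{t-1}=\zeta_t$ (which holds on $\Gamma_t$ by the construction in Lemma~\ref{lem-1} and trivially on $\Gamma_t^c$, where both sides vanish), I would obtain the clean recursion $W_t=\rho_t\zeta_t+|X_t|$. This is the starting point for everything that follows.

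For Part~1, I would verify the supermartingale property of $1/W_t'$ directly. On $\Gamma_t$, dividing both sides of~\eqref{zeta} by $\rho_t$ gives $\E(1/W_t\mid\F_{t-1})=1/(\rho_t W_{t-1})$; on $\Gamma_t^c$, $\zeta_t=0$ and the complementary inequality $\int\rho_t W_{t-1}/|x|\,K_t(\omega,dx)\le 1$ yields the same relation with $\le$. Multiplying by $D_t=D_{t-1}\rho_t$ gives $\E(1/W_t'\mid\F_{t-1})\le 1/W_{t-1}'$. Jensen's inequality for $x\mapsto 1/x$ then upgrades this to $\E(W_t'\mid\F_{t-1})\ge W_{t-1}'$, i.e.\ the generalized submartingale property. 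Convergence of the non-negative supermartingale $1/W_t'$ to a finite limit gives an a.s.\ limit of $W_t'$ in $(0,\infty]$; strict positivity of $W_t$ at each step follows by induction from $\rho_t>0$, $W_{t-1}>0$, and the fact that $|X_t|>0$ $K_t$-a.s.\ on $\Gamma_t^c$ (forced by the defining inequality of $\Gamma_t$).

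For Part~2 I would exploit the i.i.d.\ assumption to reduce to a time-homogeneous Markov chain. Scaling by $D_t=\rho^t$, the recursion becomes $W_t'=\psi(W_{t-1}')+|X_t'|$, where $\psi(w)$ is the deterministic function defined by $\int w/(\psi(w)+|x'|)\,P(dx')=1$ when $w>w^*:=1/\E(1/|X_1'|)$ and $\psi(w)=0$ otherwise (here $P$ denotes the common distribution of $X_t'$). Continuity of $\psi$ on $(w^*,\infty)$ follows from the implicit function theorem applied to the jointly continuous and (in $z$) strictly decreasing map $z\mapsto\int w/(z+|x'|)\,P(dx')$. The constant-vector case is then a direct computation: if $|X_t'|\equiv|X_1|/\rho=:c$, then $W_1'=W_0'\vee c$ and $W_t'=W_1'$ for all $t\ge 1$.

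For the non-degenerate case I would argue by contradiction. Assume $\P(W_\infty'<\infty)>0$. On this event, continuity of $\psi$ gives $\psi(W_{t-1}')\to\psi(W_\infty')$, so $|X_t'|=W_t'-\psi(W_{t-1}')$ converges a.s.\ there. Since convergence of the i.i.d.\ sequence $|X_t'|$ is a tail event, Kolmogorov's $0$-$1$ law forces it to have full probability, and its a.s.\ limit (being tail-measurable) is a.s.\ constant; but the common distribution of $|X_t'|$ must then agree with this point mass, contradicting non-degeneracy. The main obstacle will be a clean verification of the continuity of $\psi$ and careful treatment of the degenerate regime $W_\infty'\le w^*$, where the formula $\psi\equiv 0$ applies and one works directly with $W_t'=|X_t'|$.
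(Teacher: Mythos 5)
Your Part~1 and the constant-payoff case of Part~2 follow essentially the paper's own route. The paper first reduces to $\rho_t\equiv1$ by a discounting argument and then verifies $\E(V_t/V_{t-1}\mid\F_{t-1})\le1$ for $V_t=1/W_t$ directly from the defining equation \eqref{zeta} of $\zeta_t$ on $\Gamma_t$ and from the inequality $\int_\RNP \rho_t W_{t-1}/|x|\,K_t(dx)\le1$ on its complement; you perform the identical computation without the reduction, and the Jensen step, the convergence of the non-negative supermartingale, and the positivity of $W_t$ all match. For the non-degenerate case of Part~2 your route is genuinely different: the paper invokes a quantitative convergence criterion for positive supermartingales (the corollary of Proposition~7.1 of Karatzas and Kardaras), computes $\Delta H_t=g(W_{t-1})$ with $g(c)=\int_\RNP h\bigl(c/(\zeta(c)+|x|)-1\bigr)K(dx)$, and argues that $H_\infty=\infty$ on $\{\lim_t W_t<\infty\}$, forcing $V_t\to0$ there. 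You instead read off from the recursion $W_t'=\psi(W_{t-1}')+|X_t'|$ that on $\{W_\infty'<\infty\}$ the scalars $|X_t'|$ converge, and then use that convergence of an i.i.d.\ sequence is a tail event, Kolmogorov's zero--one law, and identical distribution to conclude that $|X_1'|$ is a.s.\ constant. This is more elementary and dispenses with the external criterion; as a replacement for that step it is sound.

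There is, however, a gap at the very last line, and it is instructive that the paper's own proof contains the same one. What your argument (and the paper's) actually establishes is: if $\P(W_\infty'<\infty)>0$, then $|X_t'|$ is a.s.\ constant. This does not contradict the theorem's hypothesis, which only assumes that the \emph{vector} $X_t$ is not a.s.\ constant. Take $N=2$, $\rho\equiv1$, $X_t=(\xi_t,1-\xi_t)$ with $\xi_t$ i.i.d.\ and non-degenerate in $(0,1)$: then $X_t$ is non-constant but $|X_t|\equiv1$, $\psi(w)=(w-1)^+$, and the recursion gives $W_t=W_0\vee1$ for all $t\ge1$, so $W_\infty'<\infty$. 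In the paper's proof the corresponding false step is the claim that $g(c)>0$ for every $c>0$ whenever the support of $K$ contains more than one point: if that support lies in the $\L^1$-sphere $\{x:|x|=v\}$, then $\zeta(c)=(c-v)^+$ and $g(c)=0$ for all $c\ge v$. The dichotomy in claim~2 should therefore be stated in terms of $|X_t'|$: if $|X_t'|$ is a.s.\ equal to a constant $c$, then $W_t'=W_0\vee c$ for $t\ge1$, and otherwise $W_\infty'=\infty$. Read against that corrected statement, your proof closes, modulo the routine verification (which you flag yourself) that $\psi$ is continuous on all of $(0,\infty)$, including at the threshold $w^*$.
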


\begin{proof}
From~\eqref{proof-W} we find that $W_t'$ satisfies the equation
\begin{equation}
\label{growth-W}
W_t' = (1 - |\hat \lambda_t|)W_{t-1}' + |X_t'|
\end{equation}
(here $\hat\lambda_t$ denotes the realization of the relative growth
optimal strategy). 

To simplify the proof, let us first show that it can be reduced to the
case when $\rho_t=1$ for all $t$. Indeed, consider the two markets: the
first one is defined by the sequences $X_t^{(1)}$, $\rho_t^{(1)}$ and
initial capital $Y_0^{(1)}>0$, while the second one by the sequences
$X_t^{(2)} = X_t^{(1)}/D_t^{(1)}$, $\rho_t^{(2)}=1$, and
$Y_0^{(2)}=Y_0^{(1)}$. Assume all the investors in the both markets use the
relative growth optimal strategy. Denote the total wealth in these markets
by $W_t^{(1)}$ and $W_t^{(2)}$, respectively.

It is not hard to see that $\Gamma_t^{(2)} =
\{(\omega,c/D_{t-1}^{(1)}(\omega)) : (\omega,c) \in \Gamma_t^{(1)}\}$, and
$\zeta^{(2)}_t(\omega,c) = \zeta_t^{(1)}(\omega,cD_{t-1}^{(1)}(\omega))
/D_{t-1}^{(1)}(\omega)$. From this and \eqref{optimal}, by induction, we
find that $\hat\lambda_t^{(2)} = \hat\lambda_t^{(1)}$ and $W_t^{(2)} =
W_t^{(1)}/D_t^{(1)}$. Thus, the discounted wealth in the original market
specified by $X_t$ and $\rho_t$ will be the same as the wealth in the market
specified by $X_t'$ and $\rho_t'=1$. So, from now on we may assume
$\rho_t=1$ and $X_t'=X_t$, $W_t'=W_t$ for all $t$.

From \eqref{proof-K}, it follows that $W_t>0$ for all $t$.
Let $V_t = 1/W_t$. From \eqref{growth-W}, we  find
\[
\frac{V_t}{V_{t-1}} =\frac{1}{1-|\hat \lambda_t| + V_{t-1}|X_t|} =
\frac{W_{t-1}}{\zeta_t(W_{t-1}) + |X_t|},
\]
where we used that $1-|\hat \lambda_t| = V_{t-1} \zeta_t(W_{t-1})$. By the
construction of $\zeta_t$, we have $\E(V_t/V_{t-1}\mid \F_{t-1}) \le 1$,
hence $V_t$ is a generalized supermartingale, and, hence, a usual
supermartingale since it is non-negative. This also implies that $W_t$ is a
generalized submartingale (via the identity $W_t = \exp(-\ln V_t)$ and
Jensen's inequality). Moreover, since a non-negative supermartingale has a
finite limit, there exists $V_\infty = \lim_t V_t \in[0,\infty)$ a.s., and
consequently there exists $W_\infty \in (0,\infty]$ a.s. This finishes the
proof of the first claim of the theorem.

To prove the second claim, we will need the following auxiliary result on
convergence of positive supermartingales, which is a corollary from
Proposition~7.1 in~\cite{KaratzasKardaras07}. Suppose $S_t$ is a strictly
positive scalar supermartingale, and consider the generalized
supermartingale $Z_t$ defined by
\[
\Delta Z_t := Z_t - Z_{t-1} =  \frac{S_t}{S_{t-1}} - 1, \qquad Z_0=0.
\]
Denote by $A_t$ the compensator of $Z_t$. Let $h(x) = x^2\wedge |x|$ and
introduce the predictable sequence $H_t$ by
\[
\Delta H_t = \E(h(\Delta Z_t) \mid \F_{t-1}), \qquad H_0=0.
\]
Then we have
\[
\{\omega : \lim_{t\to\infty} S_t(\omega) = 0\} = \{\omega:\lim_{t\to\infty}
(A_t(\omega)+ H_t(\omega)) = \infty\} \text{ a.s.}
\]

We will apply this result to $S_t= V_t$, so that
\[
\Delta Z_t = \frac{W_{t-1}}{\zeta_t(W_{t-1}) + |X_t|} - 1.
\]
Observe that since $X_t$ are i.i.d.\ random vectors, the function
$\zeta_t(\omega,c)$ and the conditional distribution $K_t(\omega,dx)$ can be
chosen not depending on $\omega,t$, hence we will write them simply as
$\zeta(c)$ and $K(dx)$. Then $\Delta H_t = g(W_{t-1})$ with the function
\[
g(c) = \int_{\RNP} h\biggl(\frac{c}{\zeta(c) + |x|} - 1\biggr) K(dx), \qquad
c>0.
\]
It is not hard to check that $g(c)$ is continuous on $(0,\infty)$. Moreover,
if $X_t$ are non-constant (so the support of $K(dx)$ contains more than one
point), then $g(c)>0$ for all $c>0$. Therefore, $g(c)$ is separated from
zero on any compact set $B\not\ni \{0\}$. So, on the set $\{\omega : \lim_t
W_t(\omega) < \infty\}$ we have $H_\infty = \infty$ a.s., which implies
$\P(\lim_t W_t < \infty) = 0$ in the case when $X_t$ are non-constant.

When $X_t$ are constant, $X_t\equiv X \in \R_+^N$, it is easy to see from
the definition of $\zeta$ that
\[
(1-|\hat\lambda_t|) W_{t-1} = \zeta(W_{t-1}) = (W_{t-1} - |X|)^+.
\]
Then \eqref{growth-W} implies $W_t=W_{t-1}\vee |X|$, and, hence, $W_t =
W_0\vee |X|$. 
\end{proof}

\begin{example}
The proved result leads to an observation, which at first seems
counter-intuitive: if $X_t$ and $\tilde X_t$ are two sequences of payoffs
such that $X_t \ge \tilde X_t$ for all $t$, it may happen that the wealth of
investors will grow faster under the smaller sequence $\tilde X_t$.

As an example, let $\rho_t\equiv 1$ and $X_t \equiv X\in
\R_+^N\setminus\{0\}$ be the same constant vector, while $\tilde X_t =
X\xi_t$, where $\xi_t\in[0,1]$ are i.i.d.\ non-constant random variables.
Then, if all the investors use the relative growth optimal strategy, under
the sequence $X_t$ the wealth becomes $W_0\wedge |X|$ after $t=1$ and stop
growing, but it grows to infinity under $\tilde X_t$.

This can be explained by that the presence of randomness in $\tilde X_t$
prevents the investors from ``betting too much''.
\end{example}

\section{Conclusion}
We studied a model of a market where several investors compete for payoffs
yielded by short-lived assets. The main result of the paper consists in
proving that there exists (and is unique) an investment strategy -- the
relative growth optimal strategy -- such that the sequence of its relative
wealth is a submartingale for any strategies of competing investors.

This strategy has a number of other optimality properties. It forms a
symmetric Nash equilibrium when all the investors maximize their expected
relative wealth. It is also a survival strategy in the sense that its
relative wealth always stays separated from zero with probability one on the
whole infinite time interval. Moreover, its relative wealth tends to 1 if
the representative strategy of the other investors is asymptotically
different from it. It is also shown that the relative growth optimal
strategy posses properties similar to growth optimal strategies
(num\'eraires) in single-investor market models, in particular, it maximizes
the asymptotic and $t$-step growth rate of wealth.

Our paper extends the model of \cite{AmirEvstigneev+13} to a market with a
bank account (or a risk-free asset). Inclusion of a bank account in the
model leads to interesting analysis of the asymptotics of the absolute
wealth of investors. In particular, it turns out that the relative growth
optimality (or survival) property of a strategy does not necessarily imply
that its absolute wealth will grow if the competitors use ``bad''
strategies.

\phantomsection
\addcontentsline{toc}{section}{References}%
\bibliographystyle{apalike}
\bibliography{relative_growth_optimal}

\begin{thebibliography}{}

\bibitem[Algoet and Cover, 1988]{AlgoetCover88}
Algoet, P.~H. and Cover, T.~M. (1988).
\newblock Asymptotic optimality and asymptotic equipartition properties of
  log-optimum investment.
\newblock {\em The Annals of Probability}, 16(2):876--898.

\bibitem[Aliprantis and Border, 2006]{AliprantisBorder06}
Aliprantis, C.~D. and Border, K.~C. (2006).
\newblock {\em Infinite Dimensional Analysis: A Hitchhiker’s Guide}.
\newblock Springer, 3rd edition.

\bibitem[Amir et~al., 2013]{AmirEvstigneev+13}
Amir, R., Evstigneev, I.~V., and Schenk-Hopp{\'e}, K.~R. (2013).
\newblock Asset market games of survival: a synthesis of evolutionary and
  dynamic games.
\newblock {\em Annals of Finance}, 9(2):121--144.

\bibitem[Belkov et~al., 2017]{BelkovEvstigneev+17}
Belkov, S., Evstigneev, I.~V., and Hens, T. (2017).
\newblock An evolutionary finance model with a risk-free asset.
\newblock Swiss Finance Institute Research Paper 17-28, Swiss Finance
  Institute.

\bibitem[Blume and Easley, 1992]{BlumeEasley92}
Blume, L. and Easley, D. (1992).
\newblock Evolution and market behavior.
\newblock {\em Journal of Economic Theory}, 58(1):9--40.

\bibitem[Blume and Easley, 2006]{BlumeEasley06}
Blume, L. and Easley, D. (2006).
\newblock If you're so smart, why aren't you rich? {B}elief selection in
  complete and incomplete markets.
\newblock {\em Econometrica}, 74(4):929--966.

\bibitem[Breiman, 1961]{Breiman61}
Breiman, L. (1961).
\newblock Optimal gambling systems for favorable games.
\newblock In {\em Proceedings of the 4th Berkeley Symposium on Mathematical
  Statistics and Probability}, volume~1, pages 63--68.

\bibitem[Evstigneev et~al., 2016]{EvstigneevHens+16}
Evstigneev, I., Hens, T., and Schenk-Hopp{\'e}, K.~R. (2016).
\newblock Evolutionary behavioral finance.
\newblock In Haven, E. et~al., editors, {\em The Handbook of Post Crisis
  Financial Modelling}, pages 214--234. Palgrave Macmillan UK.

\bibitem[Hakansson and Ziemba, 1995]{HakanssonZiemba95}
Hakansson, N.~H. and Ziemba, W.~T. (1995).
\newblock Capital growth theory.
\newblock In {\em Finance}, volume~9 of {\em Handbooks in Operations Research
  and Management Science}, pages 65--86. Elsevier.

\bibitem[Holtfort, 2019]{Holtfort19}
Holtfort, T. (2019).
\newblock From standard to evolutionary finance: a literature survey.
\newblock {\em Management Review Quarterly}, 69(2):207--232.

\bibitem[Karatzas and Kardaras, 2007]{KaratzasKardaras07}
Karatzas, I. and Kardaras, C. (2007).
\newblock The num{\'e}raire portfolio in semimartingale financial models.
\newblock {\em Finance and Stochastics}, 11(4):447--493.

\bibitem[Karatzas and Shreve, 1998]{KaratzasShreve98}
Karatzas, I. and Shreve, S.~E. (1998).
\newblock {\em Methods of Mathematical Finance}.
\newblock Springer.

\bibitem[Kelly, 1956]{Kelly56}
Kelly, Jr, J.~L. (1956).
\newblock A new interpretation of information rate.
\newblock {\em Bell System Technical Journal}, 35(4):917--926.

\bibitem[Long, 1990]{Long90}
Long, J.~B. (1990).
\newblock The numeraire portfolio.
\newblock {\em Journal of Financial Economics}, 26(1):29--69.

\bibitem[Platen, 2006]{Platen06}
Platen, E. (2006).
\newblock A benchmark approach to finance.
\newblock {\em Mathematical Finance}, 16(1):131--151.

\bibitem[Shapley and Shubik, 1977]{ShapleyShubik77}
Shapley, L. and Shubik, M. (1977).
\newblock Trade using one commodity as a means of payment.
\newblock {\em Journal of political economy}, 85(5):937--968.

\bibitem[Shiryaev, 2019]{Shiryaev19en}
Shiryaev, A.~N. (2019).
\newblock {\em Probability--2}.
\newblock Springer, 3rd edition.

\end{thebibliography}

\end{document}